\newcommand{\cmark}{\ding{51}}%
\newcommand{\xmark}{\ding{55}}%
\newtheorem{theorem}{Theorem}[section]
\newtheorem{lemma}[theorem]{Lemma}
\newcommand{\ra}[1]{\renewcommand{\arraystretch}{#1}} 
\newcommand{\set}[1]{\mathcal{ #1 }}
\newcommand{\ub}[1]{\overline{ #1 }}
\newcommand{\lb}[1]{\underline{ #1 }}
\newcommand{\R}{\mathbb{R}}
\newcommand{\norm}[1]{\left\lVert #1 \right\rVert}
\newcommand{\control}{u}
\newcommand{\frenetTrans}{\mathcal{F_\gamma}}
\newcommand{\free}{\set{P}}
\newcommand\copyrighttext{%
	\footnotesize \textcopyright 2022 IEEE. Personal use of this material is permitted. Permission from IEEE must be obtained for all other uses, in any current or future media, including reprinting/republishing this material for advertising or promotional purposes, creating new collective works, for resale or redistribution to servers or lists, or reuse of any copyrighted component of this work in other works.}
\newcommand\copyrightnotice{%
	\begin{tikzpicture}[remember picture,overlay]
		\node[anchor=south,yshift=5pt] at (current page.south) {\fbox{\parbox{\dimexpr\textwidth-\fboxsep-\fboxrule\relax}{\copyrighttext}}};
	\end{tikzpicture}%
}
\DeclareMathSymbol{\shortminus}{\mathbin}{AMSa}{"39}
\title{\LARGE \bf
	Frenet-Cartesian Model Representations for Automotive Obstacle Avoidance within Nonlinear MPC 
}
\author{Rudolf Reiter$^{1}$, Armin Nurkanovi\'c$^{1}$, Jonathan Frey$^{1,2}$ and Moritz Diehl$^{1,2}$
	\thanks{$^{1}$Department of Microsystems Engineering (IMTEK), University Freiburg, 79110 Freiburg, Germany
	{\tt\small \{rudolf.reiter, armin.nurkanovic, jonathan.frey, moritz.diehl\}@imtek.uni-freiburg.de}}%
	\thanks{$^{2}$Department of Mathematics, University Freiburg, 79110 Freiburg, Germany
	}%
}
\begin{document}
	
	\maketitle
	\copyrightnotice
	\thispagestyle{empty}
	\pagestyle{empty}

	\begin{abstract}
		In recent years, nonlinear model predictive control (NMPC) has been extensively used for solving automotive motion control and planning tasks.
		In order to formulate the NMPC problem, different coordinate systems can be used with different advantages.
		We propose and compare formulations for the NMPC related optimization problem, involving a Cartesian and a Frenet coordinate frame (CCF/ FCF) in a single nonlinear program (NLP).
		We specify costs and collision avoidance constraints in the more advantageous coordinate frame, derive appropriate formulations and compare different obstacle constraints.
		With this approach, we  exploit the simpler formulation of opponent vehicle constraints in the CCF, as well as road aligned costs and constraints related to the FCF.
		Comparisons to other approaches in a simulation framework highlight the advantages of the proposed approaches.
	\end{abstract}
	\section{Introduction}
	\label{section:introduction}
		Trajectory optimization with obstacle avoidance is a major challenge of motion planning and control in autonomous driving and racing. 
		Trajectories need to be feasible to kinodynamic equations and avoid collisions with objects that are often hard to predict.
		Collision avoidance and the related generation of a reference trajectory or collision avoidance as part of the controller, e.g., model predictive control (MPC), is often formulated as a nonlinear discrete time-optimal optimal control problem \cite{Kloeser2020, Raji2022, Liniger2015, Brito2019}.
		Through a carefully chosen nonlinear program (NLP) formulation and by means of dedicated real-time optimization solvers \cite{Verschueren2021, Sathya2018}, the problem can be solved efficiently.
		The transformation of the dynamics into a road aligned coordinate frame (CF), namely the Frenet CF (FCF), has shown many advantages, such as the simplification of references and road boundaries \cite{Kloeser2020,Reiter2021,Werling2010}.
		Nevertheless, the transformed coordinates also come with the disadvantage of transformed geometric obstacle shapes \cite{Xing2022}, cf. Sec.~\ref{sec:model_comparison}.
		Typical convex geometric shapes, such as boxes, ellipses or circles, are easier to describe in the Cartesian reference CF and become nonconvex after transformation into the FCF.
		The shapes of objects in both frames are shown in Fig. \ref{fig:race_car}.
		\begin{figure}
			\begin{center}
				\includegraphics[scale=0.85]{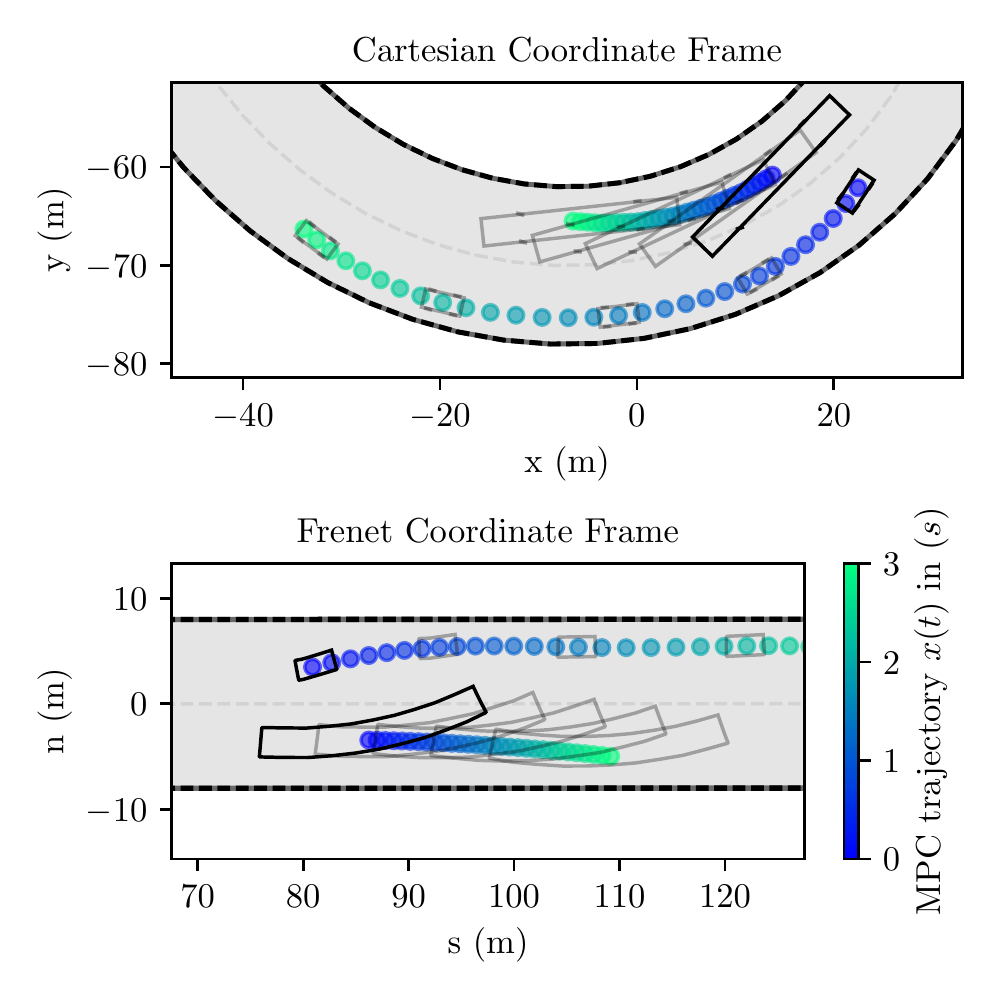}
				\caption{Simulated overtaking of the same maneuver in two CFs, namely the Cartesian (top plot) and Frenet CF (bottom plot). 
					Planned trajectories with~$\Delta t=0.1$s and snapshots every~$0.7$s of boundary box alignments.}
				\label{fig:race_car}
			\end{center}
		\end{figure}
		In nonlinear optimization, "lifting" is a technique where the optimization problem is formulated and solved in a higher dimensional space, which offers advantages regarding convergence rates and region of attraction \cite{Albersmeyer2010}.
		The contribution of the presented paper is a way to extend and lift the state space of the vehicle model by including both CFs and thereafter formulate constraints and costs in the more appropriate CF.
		We show an increase of overall performance due to the improved description of the obstacle shapes with a variety of deterministic obstacle avoidance formulations in simulation.
		Despite the increased state dimensions, even the computation time can be lowered compared to a pure FCF representation.
		Additionally, references can be set in any of the two CFs, which allows for flexible combination with planning modules that use either CF, e.g., \cite{Varquez2020}.\\
		\textbf{Related work:}
			The effectiveness of NMPC using the FCF related to automotive tasks was shown in numerous works \cite{Kloeser2020, Raji2022, Reiter2021,Werling2010, Varquez2020, Wang2021, Buyval2017, Reiter2022, Li2021, Rosolia2017, Ayoub2022, Wang2015}.
			None of them explicitly considers the shape transformation of objects, which are rather over-approximated with convex shapes in the FCF.
			Convex obstacle shapes in the Cartesian CF (CCF) are considered in \cite{Rasekhipour2016} with potential fields, in \cite{Wang2021,Ziegler2014,Jung2021} with covering circles and Euclidean distance constraints, in \cite{Brito2019, Nair2022a} with ellipses, in \cite{Brossette2017, Nair2022} with separating hyperplanes and \cite{Sathya2018, Evens2022} with a formulation related to a conjunction of convex planes covering the obstacle.
			The most prominent variants are compared within this paper in the Frenet and the lifted formulation.
			More importantly, the shape fitting problem with transformed objects in the FCF and an approach with surrogate representations in both CFs was recently considered in a related way in~\cite{Xing2022}.
			However, \cite{Xing2022} focuses on different topics, as it deals with linear MPC and does not consider dedicated obstacle formulations.
			Furthermore, it integrates an approximation of the transformation itself into the model, i.e., and approximation of a DAE, whereas in our formulation we provide a reduced index formulation which constitutes an ODE (see Sec.~\ref{sec:overview_lifting}) or eliminate algebraic variables directly. 
			Another variant of tracking along a reference path stems from \cite{Lam2010} and was extended to vehicles in \cite{Liniger2015} and \cite{Kuchera2013}.
			It uses a method called \emph{contouring control}, which uses a state on a path-length parameterized reference curve and an additional state for its path position.
			Similarly to \cite{Xing2022}, it approximates the transformation implicitly, which involves the approximation of a bi-level optimization problem for finding the closest point on the reference curve.\\
			\textbf{Contribution}:
			This paper proposes novel NMPC formulations that extend the state space to two CFs and allow for an efficient handling of the occurring costs and constraints.
			Thereby, the usually convex and simple obstacle shapes in the Cartesian CF can be directly used in the NMPC formulation.
			We show in simulation that we outperform the conventional approach of over-approximation \cite{Li2021, Rosolia2017, Ayoub2022, Wang2015} in terms of computation time and performance.
			Furthermore, the obstacle shapes are independent of the states and the road (up to Euclidean transformations), which is not the case in a conventional Frenet representation.
			Additionally, we contribute with an extensive comparison of common obstacle avoidance formulations in the proposed formulations.
	\section{Vehicle Models}
	\label{section:models}
		In order to formulate our NMPC problem, we use a rear-axis referenced kinematic vehicle model \cite{Kloeser2020, Reiter2021}.
		It comprises three states that are related to the CF which we write as~$x^\mathrm{c}$. Particularly, we use~$x^\mathrm{c,C}=[x\quad y\quad\varphi]^\top \in \R^3$ for the Cartesian states and~$x^\mathrm{c,F}=[s\quad n\quad\alpha]^\top\in \R^3$ for the Frenet states.
		We use the Cartesian (earth) position states~$x$,~$y$ and the heading angle~$\varphi$.
		Similarly, in the FCF we use position states~$s$ and~$n$, together with the difference angle~$\alpha$, which are all related to a reference curve.
		The other states~$x^\mathrm{\neg c}=[v\quad\delta]^\top\in\R^2$ are needed for both, CCF and FCF, where~$v$ is the absolute value of the velocity at the rear axis and~$\delta$ is the steering angle.
		For the full CCF model we use the state $x^\mathrm{C}=[x^{\mathrm{c,C}\top} \quad x^{\mathrm{\neg c}\top}]^\top$ and for the FCF model we use the state $x^\mathrm{F}=[x^{\mathrm{c,F}\top} \quad x^{\mathrm{\neg c}\top}]^\top$.
		We assume a rear-wheel drive force~$F^\mathrm{d}$ as input, which includes the acceleration and braking force.
		We simplify that the braking force of the front wheel acts in the direction of the rear-wheel, thus we include it in the force~$F^\mathrm{d}$. 
		This approximation is valid for small steering angles, since the projection of the front wheel force~$F^\mathrm{d,front}$ on the rear-wheel reference is given by~$F^\mathrm{d,front}\cos(\delta)$.
		The most prominent resistance forces for wind $F^\mathrm{wind}(v,\varphi)=c_\mathrm{air}v_\mathrm{rel}(v,\varphi)^2$ and the rolling resistance $F^\mathrm{roll}=c_\mathrm{roll}\mathrm{sign}(v)$ are included, with the total resistance force~$F^\mathrm{res}(v,\varphi)=F^\mathrm{wind}(v,\varphi)+c_\mathrm{roll}\mathrm{sign}(v)$. 
		The air drag depends on the vehicle speed~$v$ in relation to the wind speed~$v_\mathrm{wind}$ with the air friction parameter~$c_\mathrm{air}$.
		The rolling resistance is proportional to~$\mathrm{sign}(v)$ by the constant~$c_\mathrm{roll}$.
		We drop the sign function, since we only consider strictly positive speeds.
		We model the relative speed related to the air drag, which we assume constant and known, by~$v_\mathrm{rel}(v,\varphi)=v+v_\mathrm{wind}\cos(\varphi-\varphi_\mathrm{wind})$, where~$\varphi_\mathrm{wind}$ is the angle of the direction the wind asserts force and~$\varphi$ is the heading of the vehicle in the CCF. 
		In most works (e.g., \cite{Liniger2015,Reiter2021}) the influence of external wind is ignored, since its influence might be small, unpredictable or absent in experimental indoor setups.
		Nevertheless, we include it, since it demonstrates an influence that can be easily modeled in the FCF, but is difficult to model in the FCF.\\
		The input of our model is given by $u=[F^\mathrm{d}\quad r]^\top\in\R^2$, where~$r = \frac{\mathrm{d} \delta}{ \mathrm{d} t}$ denotes the steering rate.
		The dynamics of the coordinate unrelated states are written as
		\begin{equation}
			\dot{x}^\mathrm{\neg c}=
			f^\mathrm{\neg c}(x^\mathrm{\neg c},u,\varphi)=
			\begin{bmatrix}
			\frac{1}{m}(F^\mathrm{d}+F^\mathrm{wind}(v,\varphi)-F^\mathrm{res}(v))\\
			r
			\end{bmatrix},
		\end{equation}
		where $m$ denotes the vehicle mass.
		The lateral acceleration $a_\mathrm{lat}(x^\mathrm{\neg c})$ at the rear wheel axis is given by
		\begin{equation}
			a_\mathrm{lat}(x^\mathrm{\neg c})=\frac{v^2\tan(\delta)}{l},
		\end{equation}
		where~$l=l_\mathrm{f}+l_\mathrm{r}$ is the total wheelbase length of the vehicle.
		\subsection{CCF vehicle model}
			Using simple kinematic relations, the dynamics of the Cartesian states can be written as
			\begin{equation}
			\label{eq:model_cartesian_only}
				\dot{x}^\mathrm{c,C}=
				f^\mathrm{c,C}(x^\mathrm{C}, u)=
				\begin{bmatrix}
				v \cos(\varphi)\\
				v \sin(\varphi)\\
				\frac{v}{l}\tan(\delta)
				\end{bmatrix}.
			\end{equation}
			The full five-state Cartesian vehicle model is given by
			\begin{equation}
			\label{eq:model_cartesian_full}
				\dot{x}^\mathrm{C}=
				\begin{bmatrix}
				f^\mathrm{c,C}(x^\mathrm{C}, u)\\
				f^\mathrm{\neg c}(x^\mathrm{\neg c},u,\varphi)
				\end{bmatrix}.
			\end{equation}
		\subsection{FCF vehicle model}
			Since in usual vehicle motion control tasks, the vehicle moves mainly close to a reference curve~${\gamma:\R \rightarrow \R^2}$, i.e., the street center line, the transformation into a curvilinear CF is a natural choice.
			The reference curve~$\gamma(\sigma)=[\gamma_x(\sigma) \quad \gamma_y(\sigma)]^\top$ is parameterized by its path length~$\sigma$ and can be fully described by one initial transformation offset~$\gamma(\sigma_0)$, an initial orientation~$\varphi_0$ and the curvature~${\kappa(\sigma)=\frac{\mathrm{d}\varphi^\gamma}{\mathrm{d}\sigma}}$ along its path.
			We use~$\varphi^\gamma(\sigma)$ for the tangent angle in each point of the curve.
			As part of the Frenet transformation, we project the Cartesian vehicle reference point~$p^\mathrm{veh}\in\R^2$ on the closest point in the Euclidean distance of the reference curve with
			\begin{equation}
			\label{eq:frenet_transformation_argmin}
			s^*(p^\mathrm{veh})=\arg \min_{\sigma}\norm{p^\mathrm{veh}-\gamma(\sigma)}^2_2 .
			\end{equation}
			W.l.o.g., we always set the initial reference point of the transformation to zero.
			The position~$s$ on the curve is used as longitudinal FCF state.
			The vector~$(p^\mathrm{veh}-\gamma(s^*))$ is difference of the closest point on the curve to the vehicle.
			By using the 90 degree rotation matrix~$R^{90}$ and projection to the normal unit vector of the curve~$e_n=R^{90}\gamma'(s^*)$, we obtain the Frenet state~$n=(p^\mathrm{veh}-\gamma(s^*))^\top e_n$.
			The third Frenet state~$\alpha$ relates the tangent angle of the curve to the heading of the vehicle with~$\alpha=\varphi-\varphi^\gamma(s^*)$.
			The relations of the transformation are shown in Fig. \ref{fig:frame_vehicle_model}.
			We write the transformation of a Cartesian state~$x^\mathrm{c,C}=[x\quad y\quad \varphi]^\top$ to a Frenet state~$x^\mathrm{c,F}=[s\quad n\quad \alpha]^\top$ by means of the transformation 
			\begin{align}
				x^\mathrm{c,F}=\frenetTrans(x^\mathrm{c,C})=\begin{bmatrix}
				s^* \\
				(p^\mathrm{veh}-\gamma(s^*))^\top e_n \\
				\varphi^\gamma(s^*)-\varphi 
				\end{bmatrix},
			\end{align}
			and its inverse by
			\begin{align}
				x^\mathrm{c,C}=\frenetTrans^{-1}(x^\mathrm{c,F})=\begin{bmatrix}
				\gamma_x(s)-n\sin(\varphi^\gamma(s)) \\
				\gamma_y(s)+n\cos(\varphi^\gamma(s)) \\
				\varphi^\gamma(s)-\alpha 
				\end{bmatrix}.
			\end{align}
			The existence and uniqueness of the transformation is guaranteed under mild assumptions which are discussed in detail in \cite{Reiter2021}. 
			\begin{figure}
				\begin{center}
					\def\svgwidth{.4\textwidth}
					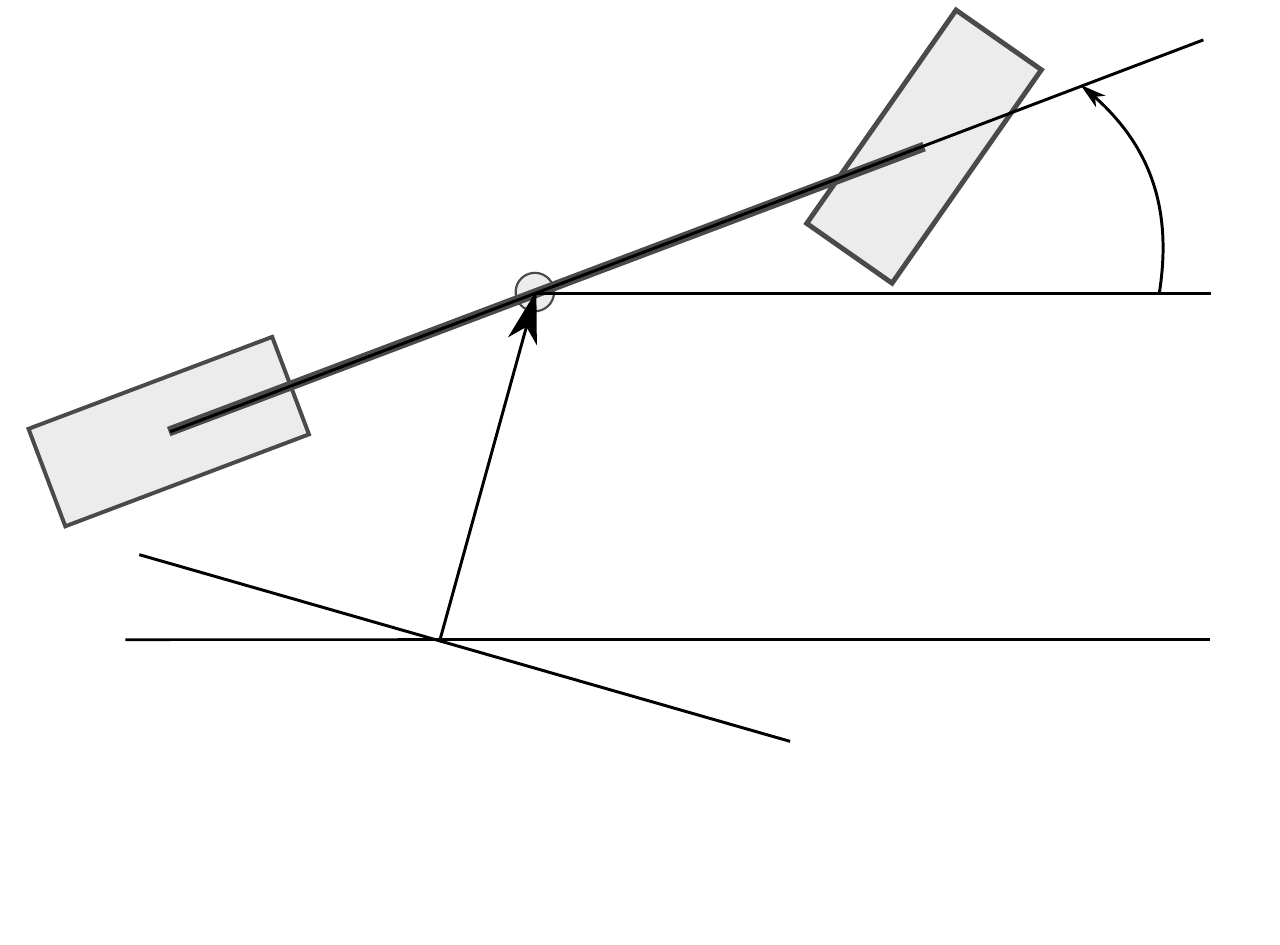
					\caption{State relations between Cartesian and FCF}
					\label{fig:frame_vehicle_model}
				\end{center}
			\end{figure}
			As shown in \cite{Kloeser2020}, we obtain the ODE for the kinematic motion in the FCF as
			\begin{equation}
			\label{eq:model_frenet_only}
				\dot{x}^\mathrm{c,F}=
				f^\mathrm{c,F}(x^\mathrm{F}, u)=
				\begin{bmatrix}
					\frac{v \cos(\alpha)}{1-n \kappa(s)}\\
					v \sin(\alpha)\\
					\frac{v}{l}\tan(\delta) - \frac{\kappa(s) v \cos(\alpha)}{1-n \kappa(s)}
				\end{bmatrix}.
			\end{equation}
			The Cartesian state~$\varphi$ is needed in order to formulate the wind disturbance.
			It is not available in the FCF, consequently it needs to be computed by evaluating the tangent angle~$\varphi^\gamma(s)$ of the current position~$s$ on the reference curve~$\gamma(\sigma)$.
			This can be approximated by a spline function~$\hat{\varphi}^\gamma(s)$ that is computed for the road layout.
			It yields a spline approximation for the heading angle with~$\hat{\varphi}(s,\alpha)=\hat{\varphi}^\gamma(s)+\alpha$.
			The full FCF vehicle model is consequently given by the five state model
			\begin{equation}
			\label{eq:model_frenet_full}
				\dot{x}^\mathrm{F}=
				\begin{bmatrix}
				f^\mathrm{c,F}(x^\mathrm{F}, u)\\
				f^\mathrm{\neg c}(x^\mathrm{\neg c},u,\hat{\varphi})
				\end{bmatrix}.
			\end{equation}
		\subsection{Model comparison}
		\label{sec:model_comparison}
			As indicated in Sec.~\ref{section:introduction}, the CF models have different advantages when used in a NMPC formulation, see Tab.~\ref{table:model_comparison} for an overview.
			The definition of road boundaries and the reference curve, which are often center-lane-aligned curves are straight forward to define in the FCF, but hard to define in the CCF.
			However, the obstacle definition in the FCF is cumbersome due to several reasons.
			Despite convex obstacle shapes in the CCF, safety cannot be guaranteed when using SQP to solve the NMPC problem with the Frenet model.
			Convex obstacle shapes cannot be guaranteed to be convex, if transformed into the FCF.
			This fact can be seen from the following example.
			Consider a straight line, which is a convex set, and a circular road. Let the line intersect the road at coordinates~$\gamma(\sigma_1)=[x_1\quad y_1]^\top$ and~$\gamma(\sigma_2)=[x_1\quad y_1]^\top$. The transformed Frenet states~$n_1,n_2$ are zero in either point. At~$\sigma_3\in(\sigma_1,\sigma_2)$ the transformed state~$n_3\neq 0$, thus the transformed set is not convex.
			By considering Lemma \ref{lemma:convex_safe} it can be shown that convex obstacles are guaranteed to be a subset of the linearized constraints within an SQP iteration, thus safely over-approximated.
			\begin{lemma}
			\label{lemma:convex_safe}
				Regard the set~$\mathcal{C} = \{x\in \R^{n} \mid g(x) \geq 0\}$  and $\mathcal{C}^{\mathrm{lin}}(x^*) = \{x\in \R^{n} \mid g(x^*) + \nabla g(x^*)^\top (x-x^*) \geq 0\}$. Suppose that the function~$g:\R^n \rightarrow \R$ is convex, then $ \mathcal{C} \subseteq \mathcal{C}^{\mathrm{lin}}(x^*)$ for any $x^*$.
			\end{lemma}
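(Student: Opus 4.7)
The plan is to attack the inclusion directly by taking an arbitrary $x \in \mathcal{C}$ and tracing what the convexity hypothesis on $g$ actually gives at the linearization point. The single ingredient available is the first-order characterization of convex functions: for differentiable convex $g$, the affine model $\ell_{x^*}(x) := g(x^*) + \nabla g(x^*)^\top (x - x^*)$ is a global under-estimator, so $g(x) \geq \ell_{x^*}(x)$ for all $x, x^*$. My first step is therefore to substitute $g(x) \geq 0$ and check whether this propagates to $\ell_{x^*}(x) \geq 0$, which is what membership in $\mathcal{C}^{\mathrm{lin}}(x^*)$ requires.

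It is immediately visible that this substitution only delivers $\ell_{x^*}(x) \leq g(x)$, i.e., an upper bound on the linearization by a non-negative number, and so no useful lower bound. The scalar example $g(\xi) = \xi^2 - 1$ with $x^* = 0$ makes the failure concrete: $\mathcal{C} = (-\infty,-1]\cup[1,\infty)$, whereas $\ell_{0}(\xi) \equiv -1$ and hence $\mathcal{C}^{\mathrm{lin}}(0) = \emptyset$. So read with the sign convention exactly as written, the containment cannot be derived from convexity alone, and the plan must pivot to reconciling the stated inclusion with the surrounding narrative.

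The main obstacle is therefore interpretive rather than analytic. The fix consistent with the text about a convex obstacle being \emph{safely over-approximated} is to read $\mathcal{C}$ as the obstacle sublevel set $\{g \leq 0\}$, with $\mathcal{C}^{\mathrm{lin}}(x^*) = \{\ell_{x^*} \leq 0\}$; this set is convex precisely because $g$ is convex, and the gradient inequality now runs in the useful direction. Under that reading the proof collapses to one line, $\ell_{x^*}(x) \leq g(x) \leq 0$, which places $x$ in $\mathcal{C}^{\mathrm{lin}}(x^*)$ and settles the lemma. An equivalent repair keeps the $\geq 0$ conventions and replaces ``convex'' by ``concave,'' in which case the same argument applied to $-g$ closes the proof, so once the conventions are aligned no further analytic work is needed.
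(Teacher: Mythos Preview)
Your analysis is correct and, in fact, more careful than the paper's own proof. The paper's argument is the single line ``Due to convexity, $g(x^*)+\nabla g(x^*)^\top(x-x^*)\leq g(x)$ and therefore $\mathcal{C}\subseteq\mathcal{C}^{\mathrm{lin}}$,'' i.e., exactly the gradient-underestimator inequality you start from. But with the sets written as $\{g\geq 0\}$ and $\{\ell_{x^*}\geq 0\}$, that inequality does \emph{not} yield the stated inclusion, and your counterexample $g(\xi)=\xi^2-1$, $x^*=0$ is valid: $\mathcal{C}\neq\emptyset$ while $\mathcal{C}^{\mathrm{lin}}(0)=\emptyset$. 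So the paper's proof has the same gap you diagnosed; it simply does not notice it.

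Your proposed repair---reading $\mathcal{C}=\{g\leq 0\}$ as the (convex) obstacle and $\mathcal{C}^{\mathrm{lin}}(x^*)=\{\ell_{x^*}\leq 0\}$---is the interpretation consistent with the surrounding text (``convex obstacles are \ldots\ a subset of the linearized constraints \ldots\ thus safely over-approximated'') and with the obstacle formulations later in the paper, where the infeasible region is the sublevel set of a convex function. Under that reading your one-line argument $\ell_{x^*}(x)\leq g(x)\leq 0$ is precisely what the paper intends, and the two proofs coincide.
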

			\begin{proof}
				Due to convexity, $g(x^*)+\nabla g(x^*)^\top(x-x^*)\leq g(x)$ and therefore, it follows that $\mathcal{C} \subseteq \mathcal{C}^\mathrm{lin}$.
			\end{proof}
			Nonconvex obstacles are not safely over-approximated within SQP algorithms.
			Another problem that arises with objects in the FCF is the dependence of the shape on the state.
			Consequently, if the obstacle constraints are defined along a discretized time horizon, at each time step $i=0,\ldots,N$, the shape has to be transformed separately, cf. Fig.\ref{fig:race_car}.
			In typical applications this could lead to $N$ transformations for each obstacle in every NMPC iteration, followed by a convexification (e.g., bounding boxes, convex polygons, covering circles) in order to guarantee safety.
			Alternatively, an over-approximation could be used to capture all possible transformed shapes.
			However, this would lead to a striking conservatism, especially for long vehicles and low curve radii. 
			\begin{table}
				\centering
				\ra{1.2}
				\begin{tabular}{@{}lcc@{}}
					\addlinespace
					\toprule
					Feature & CCF& FCF\\
					\midrule
					reference definition & \xmark&\cmark\\
					boundary constraints & \xmark&\cmark\\
					obstacle specification & \cmark & \xmark\\
					disturbance specification & \cmark & \xmark\\
					\bottomrule
				\end{tabular}
				\caption{Comparison of the two model representations}
				\label{table:model_comparison}
			\end{table}
		\subsection{Overview of CF lifting formulations}
		\label{sec:overview_lifting}
			As outlined in Sec.~\ref{sec:model_comparison}, it is beneficial to have states of both CFs in the NLP formulation in order to simplify the constraints. 
			Several different ways of including both CFs are possible and a summary is given in Tab.\ref{table:comparison_formulations}.
			\begin{table*}
				\centering
				\ra{1.2}
				\begin{tabular}{@{}llllllll@{}}
					\addlinespace
					\toprule
					Formulation & \begin{tabular}{@{}l@{}}ODE \\ CF\end{tabular} & \begin{tabular}{@{}l@{}}Obstacle \\ CF\end{tabular} & \begin{tabular}{@{}l@{}}Cost \\ CF\end{tabular} & $n_x$ & $n_z$ & \begin{tabular}{@{}l@{}}Practical \\ Relevance\end{tabular}  &Issues \\
					\midrule
					\textbf{Conventional Frenet}			& Frenet & Frenet &Frenet& 5 & 0 &  yes & nonconvex state-depenent obstacle shapes\\
					&&&&&&&usually over-approximated \cite{Li2021, Rosolia2017, Ayoub2022, Wang2015}\\
					\textbf{Direct Elimination Frenet }	& Frenet & Cartesian &Frenet& 5 & 0 & yes & additional nonlinearities (objective, constraints)\\
					\textbf{Lifted ODE Frenet} 				& Frenet & Cartesian&Frenet & 8 & 0 & yes & redundant states\\
					DAE Frenet 				& Frenet & Cartesian&Frenet & 5 & 3 & no & bad convergence in our experiments\\
					Conventional Cartesian			& Cartesian & Cartesian& Cartesian & 5 & 0 & no & nonconvex boundary constraints \cite{Liniger2015}\\
					Cartesian with Frenet States	& Cartesian & Cartesian& Frenet &$\{5,8\}$ & $\{0,3\}$ & yes & Difficult bi-level problem. Approximations, e.g. \cite{Liniger2015}\\
					\bottomrule
				\end{tabular}
				\caption{Comparison of CF Formulations in NMPC. Bold typed formulations are compared in this paper.}
				\label{table:comparison_formulations}
			\end{table*}
			First, we can choose the main CF ODE and introduce the states related to the other CF as algebraic variables that are determined by the main CF and obtain a differential algebraic equation (DAE) of index~1.
			We could either have a CCF based DAE
			\begin{align}
			\label{eq:dae_cartesian}
			\begin{split}
				\dot{x}^\mathrm{C}&=f^\mathrm{C}(x^\mathrm{C}, u)\\
				0&=x^\mathrm{c,F}-\frenetTrans(x^\mathrm{c,C})
			\end{split}
			\end{align}
			or a FCF based DAE
			\begin{align}
			\label{eq:dae_frenet}
			\begin{split}
			\dot{x}^\mathrm{F}&=f^\mathrm{F}(x^\mathrm{F},x^\mathrm{c,C}, u)\\
			0&=x^\mathrm{c,C}-\frenetTrans^{-1}(x^\mathrm{c,F}).
			\end{split}
			\end{align}
			The inverse transformation~$\frenetTrans^{-1}$ is computationally cheap, since it just needs explicit function evaluations, whereas the forward transformation~$\frenetTrans$ requires solving an NLP as in \eqref{eq:frenet_transformation_argmin}, resulting in a computationally expensive bi-level problem in the final MPC formulation.
			Therefore, we choose the FCF of \eqref{eq:dae_frenet} as a basis and exclude CCF formulations \eqref{eq:dae_cartesian} from further comparisons.
			The DAE of index~1 (\emph{Lifted DAE Frenet}) is a possible way to formulate the problem and was similarly used in \cite{Xing2022} for a linearized model.
			Another possible formulation (\emph{Direct Elimination Frenet}) is to directly eliminate the algebraic variables in \eqref{eq:dae_frenet} by using the inverse Frenet transformation in the nonlinear constraint formulation.
			If the objective includes Cartesian states with quadratic costs and lifted constraints, the direct elimination would lead to a nonlinear objective and constraints.
			Alternatively, we can perform an index reduction of \eqref{eq:dae_frenet}, which is obtained by differentiation of the algebraic constraint, leading to
			\begin{subequations}
			\label{eq:dae_ode_frenet}
			\begin{align}
			\dot{x}^\mathrm{c,F}&=f^\mathrm{c,F}(x^\mathrm{c,F}, u)\\
			\label{eq:dae_frenet_line1}
			0&=\dot{x}^\mathrm{c,C}-\frac{\partial\frenetTrans^{-1}(x^\mathrm{c,F})}{\partial x^\mathrm{c,F}}f^\mathrm{c,F}(x^\mathrm{c,F}, u), \\
			&\quad \text{and}\quad x^\mathrm{c,C}(0):=\frenetTrans^{-1}(x^\mathrm{c,F}).
			\end{align}
			\end{subequations}
			Detailed computation (not presented here) shows the equivalence of \eqref{eq:dae_frenet_line1} to 
			\begin{align}
			\label{eq:dae_ode_frenet_final}
			\dot{x}^\mathrm{c,C}&=f^\mathrm{c,C}(x^\mathrm{c,C}, u)
			\end{align}
			This approach in \eqref{eq:dae_ode_frenet} or \eqref{eq:dae_ode_frenet_final} (\emph{Lifted ODE Frenet}) results in redundant states in both CFs, which are coupled through the inputs and the initial state.
			We do not consider formulations based on the CCF ODE, since it involves the challenging bi-level problem to obtain FCF states			
	\section{Obstacle avoidance formulations}
	\label{section:obstacle_avoidance}
		Different formulations for obstacle avoidance constraints are used in NMPC and visualized in Fig.\ref{fig:obstacles}.
		We assume that rectangles represent the real vehicle shapes.
		\begin{figure}
			\begin{center}
				\def\svgwidth{.45\textwidth}
\begingroup%
  \makeatletter%
  \providecommand\color[2][]{%
    \errmessage{(Inkscape) Color is used for the text in Inkscape, but the package 'color.sty' is not loaded}%
    \renewcommand\color[2][]{}%
  }%
  \providecommand\transparent[1]{%
    \errmessage{(Inkscape) Transparency is used (non-zero) for the text in Inkscape, but the package 'transparent.sty' is not loaded}%
    \renewcommand\transparent[1]{}%
  }%
  \providecommand\rotatebox[2]{#2}%
  \newcommand*\fsize{\dimexpr\f@size pt\relax}%
  \newcommand*\lineheight[1]{\fontsize{\fsize}{#1\fsize}\selectfont}%
  \ifx\svgwidth\undefined%
    \setlength{\unitlength}{558.21467922bp}%
    \ifx\svgscale\undefined%
      \relax%
    \else%
      \setlength{\unitlength}{\unitlength * \real{\svgscale}}%
    \fi%
  \else%
    \setlength{\unitlength}{\svgwidth}%
  \fi%
  \global\let\svgwidth\undefined%
  \global\let\svgscale\undefined%
  \makeatother%
  \begin{picture}(1,0.42228315)%
    \lineheight{1}%
    \setlength\tabcolsep{0pt}%
    \put(0,0){\includegraphics[width=\unitlength,page=1]{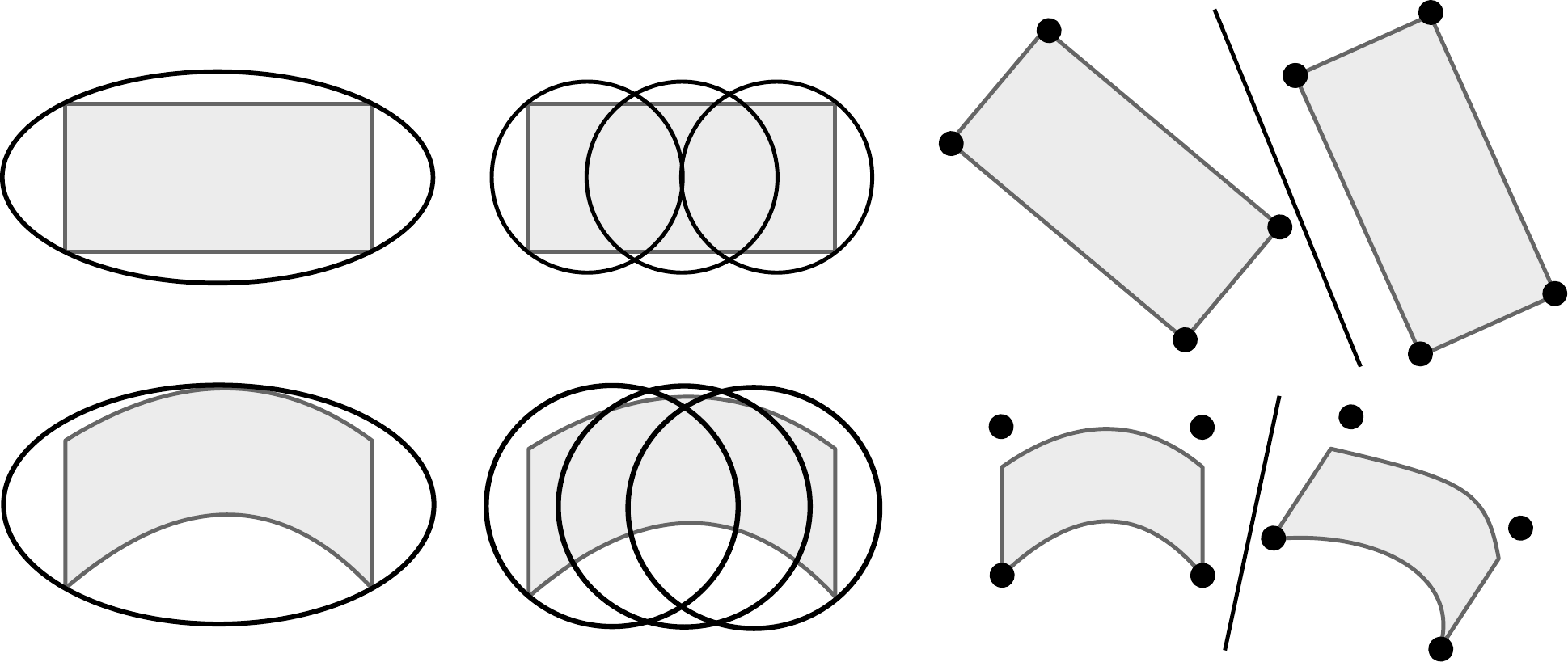}}%
    \put(0.00559788,0.37305628){\makebox(0,0)[lt]{\lineheight{1.25}\smash{\begin{tabular}[t]{l}a)\end{tabular}}}}%
    \put(0.30118305,0.37305628){\makebox(0,0)[lt]{\lineheight{1.25}\smash{\begin{tabular}[t]{l}b)\end{tabular}}}}%
    \put(0.58199955,0.37305628){\makebox(0,0)[lt]{\lineheight{1.25}\smash{\begin{tabular}[t]{l}c)\end{tabular}}}}%
    \put(0.00551038,0.1742562){\makebox(0,0)[lt]{\lineheight{1.25}\smash{\begin{tabular}[t]{l}d)\end{tabular}}}}%
    \put(0.30109553,0.1742562){\makebox(0,0)[lt]{\lineheight{1.25}\smash{\begin{tabular}[t]{l}e)\end{tabular}}}}%
    \put(0.58191203,0.1742562){\makebox(0,0)[lt]{\lineheight{1.25}\smash{\begin{tabular}[t]{l}f)\end{tabular}}}}%
    \put(0.78845777,0.19320695){\makebox(0,0)[lt]{\lineheight{1.25}\smash{\begin{tabular}[t]{l}$h^\theta$\end{tabular}}}}%
    \put(0.35851563,0.20664264){\makebox(0,0)[lt]{\lineheight{1.25}\smash{\begin{tabular}[t]{l}$n_\mathrm{circ}=3$\end{tabular}}}}%
    \put(0,0){\includegraphics[width=\unitlength,page=2]{obstacles_all.pdf}}%
  \end{picture}%
\endgroup%

				\caption{Schematic drawing of obstacle constraints. (a: ellipse CCF, b: covering circles CCF, c: separating hyperplanes CCF, d: ellipse FCF, e: covering circles FCF, f: separating hyperplanes FCF)}
				\label{fig:obstacles}
			\end{center}
		\end{figure}
		Often simple geometric covering shapes (circles \cite{Khorkov2021} or ellipses \cite{Brito2019, Nair2022a}) and related distance functions are used.
		Alternatively, covering polygons and restrictions on edges or vertices (hyperplanes) are formulated, \cite{Sathya2018,Brossette2017, Evens2022}.
		Furthermore, the boundaries can be deflected in order to integrate the obstacle into the boundary constraints \cite{Kloeser2020}. 
		The later approach is not within the scope of this work, due to the generally different formulations that for instance, include a combinatorial planner for choosing the passing side \cite{Reiter2021a}.
		We compare the formulation of obstacle avoidance constraints by an 
		\emph{ellipse} \cite{Brito2019}, \emph{covering circles} \cite{Wang2021, Khorkov2021} and \emph{separating hyperplanes} \cite{Brossette2017}.
		We also implemented a formulation introduced in \cite{Sathya2018}, which we refer to as \emph{set-vertices-exclusion}, but which poorly converged in our experiments.
		We assume a rectangular shape of the vehicles with the rear/front chassis length~$l_\mathrm{ch}=l_\mathrm{r,ch}+l_\mathrm{f,ch}$ related to the vehicle CG and chassis width~$w_\mathrm{ch}$.
		When we considering the over-approximated area to evaluate which constraint formulations is superior, the separating hyperplane do not increase the obstacle size, whereas circles and ellipses add additional unoccupied area.
		\subsection{Obstacle approximation by an ellipse}
		\label{sec:obstacle_ellipse}
		The distance between a circle and an ellipse can be computed explicitly, while computing the distance between two ellipses is more involved.
		Thus, we cover the ego car with a circle.
		The main axes~$a,b$ of an ellipse covering a rectangle are computed by ${a=\frac{1}{\sqrt{2}}(l_\mathrm{f,ch}+l_\mathrm{r,ch})}$ and $b=\frac{1}{\sqrt{2}}w_\mathrm{ch}$.
		Increasing this by the ego radius~$r^\mathrm{ego}$ leads to the extended ellipse matrix~$D=\mathrm{diag}(\left[a+r^\mathrm{ego},b+r^\mathrm{ego}\right])$.
		With the rotation matrix~$R(x^\mathrm{c,opp})\in \R^{2\times2}$ and a translation vector~$t(x^\mathrm{c,opp}) \in \R^{2}$ related to the orientation and position of the obstacle vehicle, we can formulate the collision avoidance constraint with the matrix~$\Sigma(x^\mathrm{c,opp})=R(x^\mathrm{c,opp}) D R(x^\mathrm{c,opp})^\top$ via the feasible set
		\begin{align}
		\label{eq:collision_ellipse}
		\begin{split}
			\free^\mathrm{ell}(x^\mathrm{c,opp}) = 
			\Bigl\{&x^\mathrm{c} \in \R^3\Big|
			\norm{x^\mathrm{c}-t(x^\mathrm{c,opp})}^2_{\Sigma^{-1}(x^\mathrm{c,opp})}
			\geq 1\Bigr\} .
		\end{split}
		\end{align}
		\subsection{Obstacle approximation by covering circles}
		\label{obstacle:circles}
		We use the union of a \emph{set of circles} to contain the set of the vehicle shape \cite{Wang2021, Ziegler2014, Khorkov2021}.
		For $l_\mathrm{ch}\geq w_\mathrm{ch}$, the number of covering circles~$n_\mathrm{circ}$ must be larger than $\lceil\frac{l_\mathrm{ch}}{w_\mathrm{ch}} \rceil$ and have a radius $r^{\{\mathrm{ego,opp}\}}$ of $w_\mathrm{ch} \frac{1}{\sqrt{2}}$.
		For each combination of the $n_\mathrm{circ,ego}$ and $n_\mathrm{circ,opp}$ circles a distance constraint must be satisfied, leading to $n_\mathrm{circ,ego} n_\mathrm{circ,opp}$ inequality constraints.
		The covering circle center points are computed according to \cite{Ziegler2014}, which gives us a function $p^i:\R^3\rightarrow \R^2$ for the circle center $i$ that computes the center positions $p_i=p^i(x^\mathrm{c,C})$ from the states $x^\mathrm{c,C}$.
		With $\Delta r= r^\mathrm{ego} + r^\mathrm{opp}$ and $x:=x^\mathrm{c,C}$, we write the free set as
		\begin{align}
		\label{eq:collision_circles}
		\begin{split}
		\free^\mathrm{circ}(x^\mathrm{opp}) &= 
		\biggl\{x \in \R^3\big|
		\norm{p^i(x)-p^j(x^\mathrm{opp})}_2 \geq \Delta r, \\
		&\textrm{for}~ 1 \leq i \leq n_\mathrm{circ, ego}, 1 \leq j \leq n_\mathrm{circ, opp} \biggr\}
		\end{split}
		\end{align}
		\subsection{Obstacle approximation by separating hyperplanes}
		When formulating collision avoidance with separating hyperplanes, we optimize for a feasible solution of the parameters $\theta\in \R^3$ of a hyperplane~$h^\theta(p)$.
		The parameterized hyperplane needs to separate all (four) vertices~$p_i^{\{\mathrm{ego,opp}\}}(x^\mathrm{c\{ego,opp\}})$ of either vehicle's bounding box.
		We write the feasible region using the related hyperplane existence problem with points~${\bar{p}^{\{.\}\top}=[p^{\{.\}\top} \quad 1]}$ as
		\begin{align}
			\begin{split}
				\label{eq:hyperplane_theta}
				&\free^\mathrm{hp}(x^\mathrm{opp})=\Bigl\{x \in \R^3, \theta \in \R^3\Big|\text{ } 
				\theta_1^2 + \theta_2^2=1,\\
				&\text{ }\theta^\top\bar{p}_i^{\mathrm{ego}}(x)\leq 0,\text{ } 
				\theta^\top\bar{p}_i^{\mathrm{opp}}(x^\mathrm{opp})\geq 0,\text{ } 
				\forall i=0,\ldots,3
				\Bigr\}.
			\end{split}
		\end{align}
		With the constraint $\theta_1^2 + \theta_2^2=1$ for the hyperplane parameters we avoid a degenerate solution.
		
	\section{NMPC Formulation}
	\label{section:approach}
		The aim of the NMPC is to plan a feasible trajectory of a vehicle to drive on a road with bounded curvature on a reference lane parallel to the center line and with a desired reference speed.
		Furthermore, the NMPC must avoid static and dynamic obstacles.
		As motivated in Sec.~\ref{sec:overview_lifting}, we use two variants of an FCF-based ODE to obtain Cartesian states, i.e. the \emph{direct elimination} and \emph{lifted ODE} formulation and compare it to the \emph{conventional} formulation with over-approximation, such as shown in \cite{Rosolia2017, Ayoub2022}.		
		First, we define the costs and constraints.
		\subsubsection{General costs \& constraints}
		Some constraints are unrelated to the CF, such as the lower and upper bounds for states $x^\mathrm{\neg c}$ and inputs $u$.
		For control costs ~$u^\top R u$, we use the positive semi-definite weight matrix~$R\in \R^{2\times2}$.
		\subsubsection{FCF related costs \& constraints}
		State costs are related to FCF states, since there is no practical advantage of including CCF state costs.
		A cost related to a desired reference path parallel to the road center line is accounted for by a square penalty of the deviation of the Frenet lateral coordinate~$n$ to its reference~$n_{\mathrm{ref}}$.
		For a reference speed~$v_\mathrm{ref}$, a square penalty with positive weight~$w_s$, as well as a penalty on precomputed longitudinal reference positions~$s_{\mathrm{ref},i}=\hat{s}_{0}+i \Delta t v_\mathrm{ref}$ is used, with the measured state~$\hat{s}_{0}$ and sampling time $\Delta t$. 
		Since we assume a road with constant width, boundary constraints simplify in the FCF to box constraints for an upper bound~$\ub{n}$ and a lower bound $\lb{n}$.
		\subsubsection{CCF related costs \& constraints}
		We use the collision avoidance formulations, which could be one out of $\mathcal{O}=\{\mathrm{ell},\mathrm{circ},\mathrm{hp}\}$ in the CCF, thus have the constraint ${x^\mathrm{c,C} \in \free^{\{\mathrm{ell},\mathrm{circ},\mathrm{hp}\}}}$.
		FCF costs are defined via the positive weight matrix~$Q=\text{diag}(q)$ with the weight vector $q\in\R^5$ and the reference states in~$x^\mathrm{F}_\mathrm{ref}$.
		We use a terminal cost~$Q_N=\text{diag}(q_N)$ with the weight vector $q_N\in\R^5$ after~$N$ discrete time steps. 
		\subsubsection{Direct elimination NMPC formulation}
		With the direct formulation, we can directly use the inverse transformation~${x^\mathrm{c,C}=\frenetTrans^{-1}(x^\mathrm{c,F})}$ to eliminate the Cartesian states in the constraint formulation.
		Thus we have less states, but "more" nonlinear constraints.
		We discretize the continuous trajectory with~${N-1}$ control intervals and use direct multiple shooting \cite{Bock1984} with one step of an RK4 integration function~$\Phi^\mathrm{F}(x^\mathrm{F},u, \Delta t)$ for the ODE in \eqref{eq:model_frenet_full} and the NLP formulation 
		
		\begin{small}
			\begin{mini}
				{\begin{subarray}{c}
						x^\mathrm{F}_0, \ldots, x^\mathrm{F}_N,\\
						\control_0, \ldots, \control_{N-1}\\
						\theta_1, \ldots, \theta_{n_\mathrm{opp}}
				\end{subarray}}			
				{\sum_{k=0}^{N-1} \norm{\control_k}_{R}^2 + \norm{x^\mathrm{F}_k-x^\mathrm{F}_{\mathrm{ref},k}}_{Q}^2 + \norm{x^\mathrm{F}_N-x^\mathrm{F}_{\mathrm{ref},N}}_{Q_N}^2}
				{\label{eq:MPC_2}} 
				{} 
				\addConstraint{x^\mathrm{F}_0}{= \bar{x}^\mathrm{F}_0}{},
				\addConstraint{x^\mathrm{F}_{i+1}}{= \Phi^\mathrm{F}(x^\mathrm{F}_i,u_i, \Delta t),}{ i=0,\ldots,N-1},
				\addConstraint{\lb{u}}{\leq \control_i\leq \ub{u} ,}{i=0,\ldots,N-1},
				\addConstraint{\lb{x}^\mathrm{F} }{\leq x^\mathrm{F}_{i} \leq\ub{x}^\mathrm{F}, }{i=0,\ldots,N},
				\addConstraint{\lb{x}^\mathrm{c,C} }{\leq \frenetTrans^{-1}(x^\mathrm{c,F}) \leq\ub{x}^\mathrm{c,C}, }{i=0,\ldots,N},
				\addConstraint{\lb{a}^\mathrm{lat} }{\leq a^\mathrm{F}_\mathrm{lat}(x_i) \leq\ub{a}^\mathrm{lat},}{i=0,\ldots,N},
				\addConstraint{v_N }{\leq \bar{v}_N}{},
				\addConstraint{\frenetTrans^{-1}(x^\mathrm{c,F})}{\in \free(x^\mathrm{c,opp,j}_i, \theta_j),}{i=0,\ldots,N-1},
				\addConstraint{}{}{j=1,\ldots,n_\mathrm{opp}}.
			\end{mini}
		\end{small}
		
		Decision variables $\theta_1,\ldots,\theta_{n_\mathrm{opp}}$, where ${\theta_j = [\theta_j^0,\ldots,\theta_j^{N}]\in\R^{3\times N}}$ are only used for the separating hyperplanes formulation.
		\subsubsection{Lifted ODE NMPC formulation}
		In this formulation we use the extended state ~$x^\mathrm{d}=[x^{\mathrm{F}\top} \quad x^{\mathrm{c,C}\top }]^\top$
		and the extended ODE \eqref{eq:dae_ode_frenet_final}.
		The additional states increase the size of the state-space to eight in our case, where three states stem from either CF and additional two states are CF independent states. 
		In this formulation we use the RK4 integration function~$\Phi^\mathrm{d}(x^\mathrm{d},u, \Delta t)$ of dynamics \eqref{eq:dae_ode_frenet_final}.
		We can write the final NLP for the \emph{lifted ODE} formulation as
			\begin{mini}
				{\begin{subarray}{c}
						x^\mathrm{d}_0, \ldots, x^\mathrm{d}_N,\\
						\control_0, \ldots, \control_{N\!-\!1}\\
						\theta_1, \ldots, \theta_{n_\mathrm{opp}}
				\end{subarray}}			
				{\sum_{k=0}^{N-1} \norm{\control_k}_{R}^2 \!+\! \norm{x^\mathrm{F}_k-x^\mathrm{F}_{\mathrm{ref},k}}_{Q}^2 \!+\! \norm{x^\mathrm{F}_N-x^\mathrm{F}_{\mathrm{ref},N}}_{Q_N}^2}
				{\label{eq:MPC}} 
				{} 
				\addConstraint{x^\mathrm{d}_0}{= \bar{x}^\mathrm{d}_0}{},
				\addConstraint{x^\mathrm{d}_{i+1}}{= \Phi^\mathrm{d}(x^\mathrm{d}_i,u_i, \Delta t),}{ i=0,\ldots,N-1},
				\addConstraint{\lb{u}}{\leq \control_i\leq \ub{u} ,}{i=0,\ldots,N-1},
				\addConstraint{\lb{x}^\mathrm{d} }{\leq x^\mathrm{d}_{i} \leq\ub{x}^\mathrm{d}, }{i=0,\ldots,N},
				\addConstraint{\lb{a}^\mathrm{lat} }{\leq a_\mathrm{lat}(x^\mathrm{d}_i) \leq\ub{a}^\mathrm{lat},}{i=0,\ldots,N},
				\addConstraint{v_N }{\leq \bar{v}_N}{},
				\addConstraint{x^\mathrm{c,C}_i}{\in \free(x^\mathrm{c,opp,j}_i, \theta_j),}{i=0,\ldots,N-1},
				\addConstraint{}{}{j=1,\ldots,n_\mathrm{opp}}.
			\end{mini}
	\section{Numerical experiments}
	\label{section:results}
		In order to evaluate the performance of the proposed approach, we simulate two randomized scenarios that constitute three non-ego vehicles in front of the ego vehicle with a lower cruise speed.
		The scenario is simulated for 20 seconds, where usually three overtakes are possible.
		In total, 500 full simulation runs are evaluated for each scenario type.
		We record the solution times of the NMPC and the final driven distance after the simulation ended, which we take as a performance indicator.
		We use different types of obstacles, particularly long ones in the dimensions of a truck (truck-sized), as well as short ones resembling normal cars (car-sized).
		We make several simplifications in order to avoid performance influences of sources unrelated to our formulation.
		Firstly, there is no model-plant mismatch, i.e., the simulation framework and the NLP use the same kinematic vehicle model and discretization. 
		Secondly, the ego NMPC has access to the planned trajectories of the other vehicles in order to avoid an influence of prediction errors.
		Finally, we model non-ego participants to be non-interactive.
		They aim at driving along a reference line parallel to the center line.
		The simulations were run on an Alienware~m-15 Notebook with an Intel Core i7-8550 CPU (1.8 GHz). 
		The parameters for the environment and the NMPC are shown in Tab.~\ref{table:parameters_vehicles} and Tab.~\ref{table:parameters_mpc}, respectively.
		\begin{table}
			\centering
			\ra{1.1}
			\begin{tabular}{@{}llll@{}}
				\addlinespace
				\toprule
				Module & Name & Variable & Value  \\
				\midrule
				Road & road bounds$^2$ & $\lb{n},\ub{n}$& $\pm 8.5$\\
				& curvature$^1$ & $\kappa$ & $[\shortminus0.05, 0.05]$\\
				& wind speed & $v_\mathrm{wind}$ & $20$\\
				& wind direction & $\varphi_\mathrm{wind}$ & $0$\\
				\midrule
				Ego vehicle & length wheelbase & $l_\mathrm{r}, l_\mathrm{f}$ & $1.7$  \\
				& length chassis & $l_\mathrm{r,ch}, l_\mathrm{f,ch}$& $2$ \\
				& width chassis& $w_\mathrm{ch}$& $1.9$\\
				& mass & m & 1160\\
				& lateral acc. bound & $\lb{a}_\mathrm{lat},\ub{a}_\mathrm{lat}$ & $\pm 5$\\
				& input bounds & $\lb{u}, \ub{u}$ & $\pm[10\mathrm{kN}, 0.39]$ \\
				& velocity bound & $\ub{v}$ & $40$\\
				& steering angle bound & $\lb{\delta},\ub{\delta}$ & $\pm0.3 $ \\
				& starting position$^1$& $x^\mathrm{c,F}_0$& $[0,\shortminus5,0]$- \\
				& & & $ [0,5,0]$ \\
				& reference speed & $v_\mathrm{ref}$& $40$\\
				\midrule
				Opp. vehicles & length wheelbase$^2$ & $l_\mathrm{r}, l_\mathrm{f}$ & $10$  \\
				& length chassis$^2$ & $l_\mathrm{r,ch}, l_\mathrm{f,ch}$& $13$ \\
				& width chassis$^2$ & $w_\mathrm{ch}$& $4$\\
				& mass$^2$ & m & $3000$\\
				& input bounds$^2$ & $\ub{u}$ & $[30\mathrm{kN}, 0.39]$ \\
				& input bounds$^2$ & $\lb{u}$ & $[\shortminus45\mathrm{kN}, \shortminus0.39]$ \\
				& starting position$^1$ $i$& $s^i_0$& $50(i+1)$ \\
				& &$n^i_0$ & $ [\shortminus 5,5]$ \\
				& reference speed & $v_\mathrm{ref}$& $15$\\
				\bottomrule
			\end{tabular}
			\caption{Environment parameters.$^1$ Randomized with uniform distribution.$^2$ Parameters only differ in long vehicle scenario. The parameters are equal for all vehicles, if not noted explicitly. We use SI units, if not stated explicitly.
			}
			\label{table:parameters_vehicles}
		\end{table}
		We use the NLP solver \texttt{acados}~\cite{Verschueren2021} with \texttt{HPIPM}~\cite{Frison2020a}, RTI iterations and a partial condensing horizon of $\frac{N}{2}$.
		\begin{table}
			\centering
			\ra{1.2}
			\begin{tabular}{@{}lll@{}}
				\toprule
				Name & Variable & Value  \\
				\midrule
				nodes / disc. time & $N$/ $\Delta t$  & $40$/ $0.1$\\
				terminal velocity & $\ub{v}_N$ & $15$ \\
				state weights & $q$ & $[1, 500, 10^3, 10^3, 10^4]\Delta t$\\
				terminal state weights  & $q_N$ & $[10, 90, 100, 10, 10]$\\
				control weights  & $R$ & diag($[10^{-3}, 2\cdot 10^6])\Delta t$\\
				\bottomrule
			\end{tabular}
			\caption{Parameters for MPC in SI units.
			}
			\label{table:parameters_mpc}
		\end{table}
		We use obstacle constraint formulations of Sec.~\ref{section:obstacle_avoidance}.
		Besides the different obstacle dimensions, the proposed NMPC formulations \emph{convetional}, \emph{direct elimination} and \emph{lifted ODE} were evaluated with the different obstacle formulations of Sec.~\ref{section:obstacle_avoidance}.
		We use the ellipse ("EL"), the $n$ covering circles ("C$n$") and the separating hyperplane ("HP") formulation.
		In Fig~\ref{fig:eval_short} the computation times and the maximum achieved progress of the randomized scenarios are shown. 
		Clearly, final progress after overtaking in the truck-sized scenario is increased in the proposed formulation significantly due to the more accurate representation of the obstacle shape.
		For car-sized vehicles the extended states do not yield a prominent advantage, since here the Frenet transformation does not deform the obstacles vastly.
		The maximum progress is nearly equal for both proposed approaches, since the obstacle constraint formulations based on Cartesian states is equal.
		A striking difference between the two proposed formulations can be seen in the computation times, shown detailed in Tab.~\ref{table:timing_comparison}.
		While the \emph{lifted ODE} formulation even decreases the average computation time for nearly all obstacle formulations, the \emph{direct elimination} formulation increases the computation time by around 30\%.
		Remarkably, the ellipsoidal obstacle formulation in the proposed lifted ODE formulation outperforms all other obstacle formulations in both, the computation time, as well as the performance measured in the average progress after overtaking, which highlights the advantage of the formulation.
		Contrary to our expectations, the separating hyperplane formulation shows weaker performance in computation time and average progress.
		In theory, separating hyperplanes should be more accurate in capturing the obstacle shape, nevertheless, due to the disadvantageous linearizations within the SQP iterations, the shape is not captured well.
		This might be mainly due to the nonconvex and nonlinear constraint in \eqref{eq:hyperplane_theta}.
		\begin{figure}
			\label{fig:eval_short}
			\begin{multicols}{2}
				\centering
				\includegraphics[scale=0.8]{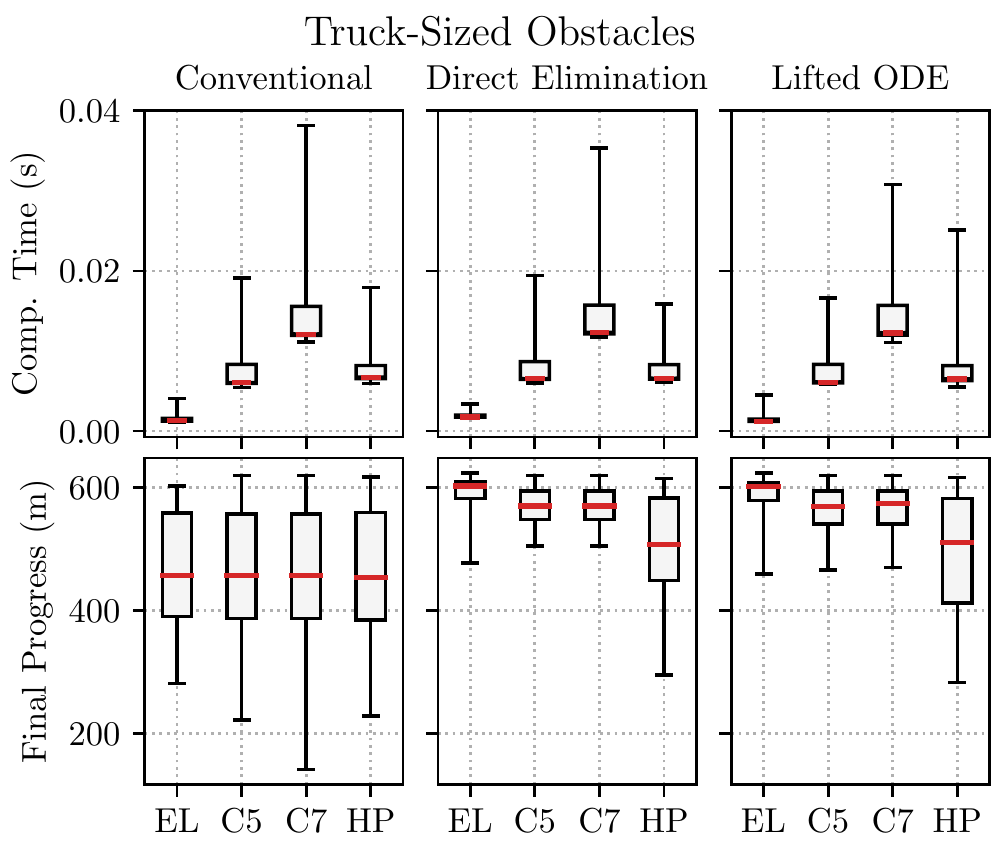}
				\includegraphics[scale=0.8]{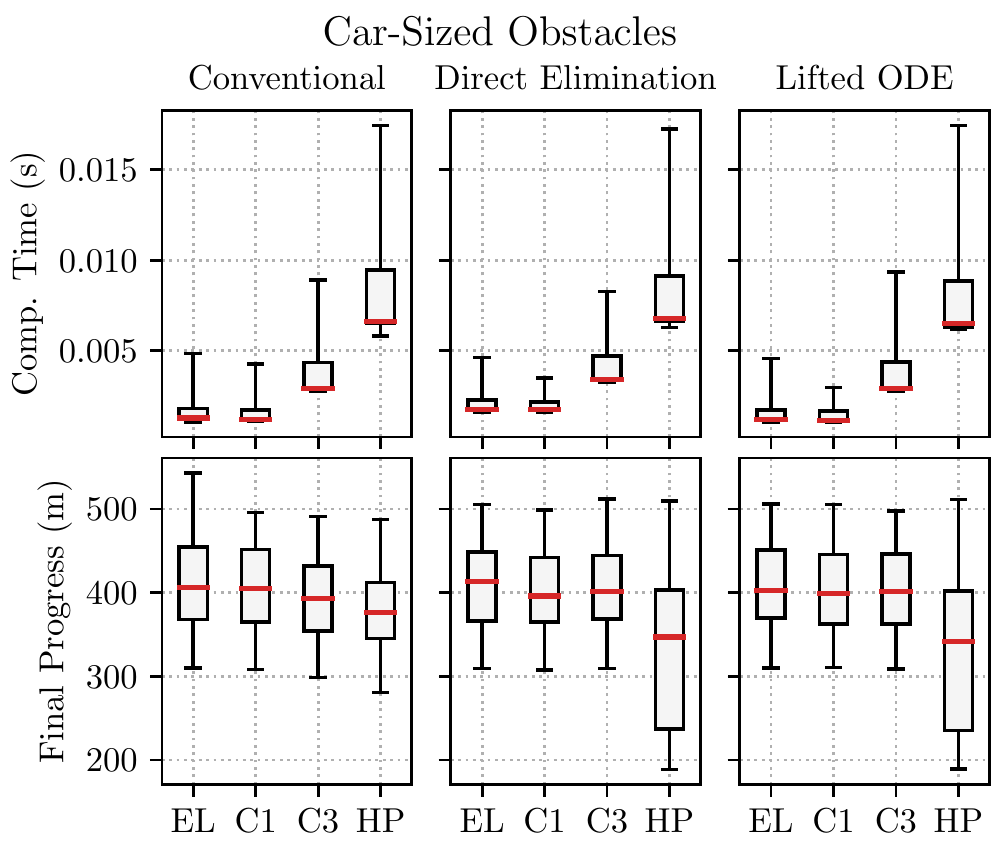}
			\end{multicols}
			\caption{Box-plot comparison of the MPC solution timings for each real-time iteration and the final progress after 20 seconds for different obstacle formulations for truck- and car-sized vehicles.}
		\end{figure}
		\begin{table}
			\centering
			\ra{1.1}
			\begin{tabular}{@{}l|l|ll|ll@{}}
				\toprule
				\multicolumn{6}{c}{Computation times (ms) for truck-sized obstacles}\\
				 & Conventional & \multicolumn{2}{l}{Direct Elimination}  & \multicolumn{2}{c}{Lifted ODE}  \\
				\midrule
				EL& $       1.5\pm        0.4$ & $       1.9\pm        0.2$ &$      28.9\%$ & $       1.4\pm        0.3$ & $      -6.6\%$\\
				C5& $       7.2\pm        1.9$ & $       7.6\pm        1.7$ &$       5.5\%$ & $       7.2\pm        1.8$ & $      -0.0\%$\\
				C7& $      14.0\pm        3.2$ & $      14.0\pm        2.8$ &$      -0.1\%$ & $      13.9\pm        2.9$ & $      -0.4\%$\\
				HP& $       7.5\pm        1.5$ & $       7.5\pm        1.5$ &$      -0.1\%$ & $       7.4\pm        1.7$ & $      -1.6\%$\\
				\toprule
				\multicolumn{6}{c}{car-sized obstacles}\\
				\midrule
				EL& $       1.5\pm        0.5$ & $       2.0\pm        0.4$ &$      29.6\%$& $       1.4\pm        0.4$ & $      -5.7\%$\\
				C1& $       1.4\pm        0.4$ & $       1.9\pm        0.4$ &$      34.0\%$& $       1.4\pm        0.4$ & $      -3.5\%$\\
				C3& $       3.6\pm        1.1$ & $       4.0\pm        1.0$ &$      12.4\%$& $       3.6\pm        1.1$ & $       0.6\%$\\
				HP& $       8.0\pm        2.3$ & $       7.9\pm        1.9$ &$      -0.6\%$& $       7.7\pm        2.0$ & $      -4.0\%$\\
				\bottomrule
			\end{tabular}
			\caption{Mean and standard deviation of computation times for different scenarios, obstacle formulations and lifting formulations. Additionally, the difference in percent to the conventional formulation is given.}
			\label{table:timing_comparison}
		\end{table}	
		A rendered resulting simulation in both CFs can be found on the website \url{https://rudolfreiter.github.io/obstacle_avoidance/}
		\label{section:setup}
	\section{Conclusions}
	We have presented two novel FCF-based formulations of MPC for vehicles that include states of the CCF in order to gain numerical advantages.
	Simulated evaluations and several wide-spread obstacle constraint formulations show that the proposed approaches are capable of representing the obstacle shapes more suitably and that with the \emph{lifted ODE} formulation even the computation time was decreased.
	Furthermore, our evaluations show that an ellipsoidal obstacle representation outperforms all other obstacle formulations in computation time.
	In conclusion, we state that the combination of the ellipsoidal obstacle constraint formulation and the \emph{lifted ODE} formulation yields superior results in all categories.
	\section*{ACKNOWLEDGMENT}
	This research was supported by DFG via Research Unit FOR 2401 and project 424107692 and by the EU via ELO-X 953348.
	\bibliographystyle{IEEEtran}
 	\bibliography{syscop,templib}
\end{document}